\documentclass[journal]{IEEEtran}
\usepackage{amsmath,amssymb,amsthm}
\usepackage{graphicx}
\usepackage{booktabs}
\usepackage{array}
\usepackage{algorithm}
\usepackage{algorithmic}
\usepackage[colorlinks=true,linkcolor=blue,citecolor=blue,urlcolor=blue]{hyperref}
\usepackage{tikz}
\usetikzlibrary{shapes.geometric, arrows.meta, positioning, calc}

\newtheorem{proposition}{Proposition}
\newtheorem{remark}{Remark}
\newcommand{\R}{\mathbb{R}}
\newcommand{\E}{\mathbb{E}}
\graphicspath{{../figs/}}

\begin{document}

\title{Certifying Collective Reasoning in Multi-Agent Systems via Koopman Spectral Analysis}

\author{Nuzhat~Khan and Indrakshi~Dey,~\IEEEmembership{Senior Member,~IEEE}%
\thanks{N.~Khan is with Universiti Teknologi Malaysia, Johor Bahru, Malaysia; I.~Dey is with Department of Computing and Mathematics, South East Technological University, Waterford, Ireland (e-mail: khan.nuzhat@utm.my; indrakshi.dey@setu.ie).}%

\thanks{Manuscript submitted to IEEE Transactions on Emerging Topics in Computational Intelligence and is under review.}}


\maketitle

\begin{abstract}
Orchestrated collectives of large language model (LLM) agents that debate and vote are an emerging form of computational intelligence: the intelligent behaviour resides in the \emph{interaction}, not in any single agent. They improve task accuracy, yet remain black boxes at the system level: there is no principled test of convergence, no bound on the rounds needed, and no faithful account of what drove a decision. This paper develops a novel framework based on Koopman operator theory and validates its theoretical guarantees on multi-agent consensus dynamics. Treating the collective as one nonlinear dynamical system on a communication graph, we read its essential behaviour off the spectrum of its Koopman transfer operator, an exact linear representation of the nonlinear dynamics estimated from interaction traces. The spectrum yields three machine-checkable certificates: the sub-dominant eigenvalue $\lambda_2$ fixes the intrinsic timescale of reasoning and yields a convergence deadline computable \emph{before} the debate runs; its eigenvector names the coherent factions the collective reasons in, and $|\lambda_2|$ certifies when that explanation is valid; and the leading spectral coordinates form a compressed, auditable message basis. On an attention-consensus model, the deadline tracks observed convergence with log--log correlation $0.93$ and bounds it in 96\% of 24 configurations; attribution is exact whenever the spectrum certifies metastability; eight of 32 coordinates preserve the decision at 99.7\% fidelity; and a certificate learned from 15 debates held on 60/60 held-out debates. The study runs in minutes on a CPU, making spectral certification a practical layer for trustworthy collective reasoning.
\end{abstract}

\begin{IEEEkeywords}
Collective intelligence, multi-agent systems, large language models, Koopman operator, dynamic mode decomposition, social reasoning, explainable artificial intelligence, consensus dynamics, trustworthy AI, semantic communications.
\end{IEEEkeywords}

\IEEEpeerreviewmaketitle

\section{Introduction}

\IEEEPARstart{A}{n} emerging class of intelligent systems locates its intelligence not in a single model but in a \emph{society} of models. Orchestrated large language model (LLM) agents that debate one another \cite{du2023}, criticise and defend candidate solutions \cite{liang2023}, or coordinate through structured conversation \cite{wu2023autogen} now routinely outperform their individual members on mathematical, coding, and question-answering benchmarks. In contrast to single-agent chain-of-thought prompting, where reasoning unfolds along an uncorrected autoregressive trajectory, multi-agent debate introduces iterative interaction and mutual feedback \cite{wei2026agentic}. This phenomenon, accuracy emerging from interaction, places LLM collectives squarely within the long-standing computational intelligence (CI) programme of studying how coordinated populations of simple or complex units produce capabilities that no unit possesses alone, a programme this journal has pursued across learning-based multi-agent consensus \cite{tetci_guang2025,tetci_zhang2025}, opponent and behaviour reasoning in agent teams \cite{tetci_hou2022}, trust formation in mixed agent teams \cite{tetci_lin2024}, and the dynamics of human--agent teaming \cite{tetci_demir2018}. What is new is the substrate: the ``opinions'' being exchanged are high-dimensional semantic embeddings produced by language models, the coupling between agents is an attention mechanism rather than a fixed gain, and the collective is deployed as a reasoning engine whose answers carry real-world consequences. Real-world deployment, moreover, requires such systems to operate in environments that are frequently novel, exhibiting phenomena no \emph{a priori} dataset anticipates and interactions that cannot be simulated in advance \cite{zhang2025data}. The consequence is a gap in what can be analysed, illustrated in Fig.~\ref{fig:conceptual_overview}: classical multi-agent consensus rests on an explicit graph topology and an analytical Lyapunov function (Fig.~\ref{fig:conceptual_overview}A), whereas a language-model collective evolves along natural-language trajectories whose update law is never written down (Fig.~\ref{fig:conceptual_overview}B), so that certifying its stability requires lifting observed trajectories into a space where the dynamics act linearly and reading the certificate off the resulting spectrum (Fig.~\ref{fig:conceptual_overview}C).

With deployment comes a demand that empirical benchmark gains cannot satisfy. Consider a practitioner who assembles a debate of eight LLM agents to answer safety-relevant questions. Three questions arise immediately, and none of them is answerable with current tools. \emph{(Q1) Convergence:} will this particular collective, on this communication topology, with these agents, settle on an answer at all, or will it oscillate, fragment into camps, or drift indefinitely? \emph{(Q2) Deadline:} if it does converge, how many rounds will that take? Every round costs inference compute, latency, and energy; a deployment must budget rounds, and today that budget is a guess. \emph{(Q3) Explanation:} when the collective commits to an answer, what drove it? The individual agents will spontaneously produce chain-of-thought narratives, but a growing body of evidence shows that such narratives are frequently \emph{unfaithful}, in the sense that they do not describe the computation that actually produced the answer \cite{jacovi2020,turpin2023}, and in any case they are accounts of individual agents, not of the collective process. Interpretability at the level of the system, as surveyed for neural networks in this journal \cite{tetci_zhang2021}, is precisely what is missing for neural \emph{societies}.

The central thesis of this paper is that all three questions become tractable, indeed become \emph{spectral} questions with computable answers, once the collective is viewed through the lens of Koopman operator theory. The idea, developed in detail in Section~\ref{sec:prelim}, can be stated in two sentences. Although the round-to-round evolution of the agents' belief embeddings is nonlinear (attention makes the coupling state-dependent), there exists a linear operator, the Koopman transfer operator \cite{koopman1931,mezic2005}, that represents this evolution \emph{exactly} on a suitable space of observable functions, and this operator can be approximated from recorded interaction traces using extended dynamic mode decomposition (EDMD) \cite{schmid2010,rowley2009,williams2015,schmid2022}. Once the operator is in hand, its spectrum does the rest: the gap between the trivial unit eigenvalue and the next eigenvalue $\lambda_2$ fixes the slowest decaying mode of disagreement and hence the number of rounds to consensus (Q1, Q2), while the eigenvector attached to $\lambda_2$ \emph{is} the pattern of disagreement the collective takes longest to resolve: the structural, system-level explanation of what the debate was actually about (Q3).

The purpose of this paper is to convert that thesis from an attractive analogy into a validated computational framework, and to be precise about what is demonstrated and what remains open. Our contributions are:

\begin{enumerate}
  \item \emph{An explanatory operator-theoretic framework for collective reasoning} (Sections~\ref{sec:prelim}--\ref{sec:method}). We develop, in tutorial style, the chain of reasoning that leads from a nonlinear attention-driven debate to a linear transfer operator, and we package the framework as a concrete certification pipeline (Algorithm~\ref{alg:pipeline}) whose inputs are nothing more than recorded interaction traces.
  \item \emph{A reference model of semantic debate} (Section~\ref{sec:model}): a minimal nonlinear attention-consensus system that isolates the two mechanisms that make LLM collectives dynamically nontrivial, namely similarity-gated (softmax) influence and graph-constrained communication, while remaining fully controllable and reproducible.
  \item \emph{Empirical validation of three certificates} (Section~\ref{sec:exp}). Across 24 collective configurations the spectral deadline tracks observed convergence with log--log correlation $r=0.93$ and is a sound upper bound in 96\% of cases (median conservatism $2.0\times$); spectral estimation error decays as $\approx M^{-0.36}$ over two decades of sample size; mode attribution recovers latent factions with 100\% accuracy in \emph{every} run in which the spectrum certifies that factions exist ($|\lambda_2|>0.9$; 42/60 runs), with $|\lambda_2|$ predicting attribution validity at correlation 0.92, so that the explanation is \emph{self-certifying}; spectral message compression retains 99.7\% decision fidelity at $4\times$ bandwidth reduction; and an end-to-end certificate learned from 15 training debates held on 60/60 held-out question-answering debates.
  \item \emph{A comparison with state-of-the-art baselines and a forward research programme} (Sections~\ref{sec:e9}, \ref{sec:disc}). Against classical graph-spectral consensus theory, empirical decay-curve fitting, and the fixed-round budgets used in current LLM-debate practice, the Koopman certificate is the only method that is simultaneously the most predictive (log--log $r=0.94$) and sound as a bound (100\% coverage); we then chart the research programme the framework opens, from finite-sample spectral theory to live LLM collectives.
\end{enumerate}

Everything reported here runs in under twenty CPU-minutes; code, traces, and results are released in full.
\begin{figure*}[t]
\centering
\resizebox{\textwidth}{!}{%
\begin{tikzpicture}[
    panel/.style={
        rectangle, 
        draw=black!70, 
        thick, 
        fill=gray!3, 
        rounded corners=4pt, 
        minimum width=4.8cm, 
        minimum height=5.0cm, 
        align=center
    },
    node_box/.style={
        circle, 
        draw=black!80, 
        thick, 
        fill=white, 
        minimum size=0.65cm, 
        font=\sffamily\bfseries\scriptsize,
        inner sep=0pt
    },
    arrow/.style={
        -Stealth, 
        thick, 
        color=black!80
    },
    title_text/.style={
        font=\sffamily\bfseries\small, 
        color=black!90
    },
    body_text/.style={
        font=\sffamily\scriptsize, 
        color=black!80, 
        align=center
    }
]

\node (panel1) [panel] {};
\node (p1_title) [title_text, anchor=north] at ($(panel1.north)+(0,-0.2)$) {(A)};
\node (a1) [node_box] at ($(panel1.center)+(-0.6,0.3)$) {$\mathbf{x}_1$};
\node (a2) [node_box] at ($(panel1.center)+(0.6,0.3)$) {$\mathbf{x}_2$};
\node (a3) [node_box] at ($(panel1.center)+(0,-0.5)$) {$\mathbf{x}_3$};
\draw [arrow, <->] (a1) -- (a2);
\draw [arrow, <->] (a1) -- (a3);
\draw [arrow, <->] (a2) -- (a3);
\node (p1_desc) [body_text, anchor=south] at ($(panel1.south)+(0,0.25)$) {
    Explicit Law: $\dot{\mathbf{x}}_i = \sum A_{ij}(\mathbf{x}_j - \mathbf{x}_i)$\\[2pt]
    \textbf{Guarantees:} Known $L$, Lyapunov $\mathcal{V}$
};

\node (panel2) [panel, right=0.5cm of panel1] {};
\node (p2_title) [title_text, anchor=north] at ($(panel2.north)+(0,-0.2)$) {(B)};
\node (l1) [node_box, fill=orange!10] at ($(panel2.center)+(-0.7,0.1)$) {$\mathcal{M}_1$};
\node (l2) [node_box, fill=orange!10] at ($(panel2.center)+(0.7,0.1)$) {$\mathcal{M}_2$};
\draw [arrow, bend left=25] (l1) to node[above, font=\tiny] {Text} (l2);
\draw [arrow, bend left=25] (l2) to node[below, font=\tiny] {Prompt} (l1);
\node (p2_desc) [body_text, anchor=south] at ($(panel2.south)+(0,0.25)$) {
    Unobservable Law: $T_{k+1} = f(T_k)$\\[2pt]
    \textbf{Challenge:} Non-linear text dynamics\\[2pt]
    \textbf{Risk:} Cascading drift \& loops
};

\node (panel3) [panel, right=0.5cm of panel2, fill=blue!3] {};
\node (p3_title) [title_text, anchor=north] at ($(panel3.north)+(0,-0.2)$) {(C)};
\node (state_space) [draw, rectangle, fill=white, thick, font=\tiny, align=center] at ($(panel3.center)+(0,0.5)$) {Trajectory Space: $\mathbf{z}_k \in \mathbb{R}^d$};
\node (koopman_space) [draw, rectangle, fill=white, thick, font=\tiny, align=center] at ($(panel3.center)+(0,-0.5)$) {Koopman Space: $\mathcal{K}\psi(\mathbf{z}_k)$};
\draw [arrow] (state_space) -- node[right, font=\tiny] {Lift $\psi$} (koopman_space);
\node (p3_desc) [body_text, anchor=south] at ($(panel3.south)+(0,0.25)$) {
    Linearized Operator: $\mathbf{K} \approx \mathbf{Y}\mathbf{X}^\dagger$\\[2pt]
    \textbf{Spectral Certification:}\\[2pt]
    $\rho(\mathbf{K}) < 1 \implies$ \textbf{Stable Consensus}
};

\draw [arrow, line width=1.2pt] (panel1.east) -- (panel2.west);
\draw [arrow, line width=1.2pt] (panel2.east) -- (panel3.west);

\end{tikzpicture}%
}
\caption{What is observable differs sharply between the two settings. In (A), agents exchange numerical states over a known graph, so the update law is explicit and convergence follows from a Lyapunov argument. In (B), the agents are language models and the round-to-round map $T_{k+1}=f(T_k)$ is never available in closed form. In (C), we recover it from data: trajectories are lifted through observables $\psi$, the operator $\mathbf{K}$ is estimated by least squares, and, once centering removes the trivial consensus direction, $\rho(\mathbf{K})<1$ certifies that every disagreement mode decays.}
\label{fig:conceptual_overview}
\end{figure*}

\section{Related Work}\label{sec:related}

\subsection{Multi-Agent Reasoning With Language Models}

Multi-agent LLM frameworks coordinate several model instances to solve complex tasks through structured debate, mutual critique, and consensus building \cite{tran2025multiagent}. Multi-agent debate \cite{du2023} lets several LLM instances answer independently and then revise in light of one another's answers; adversarial variants assign explicit critic roles \cite{liang2023}; and conversational orchestration frameworks such as AutoGen \cite{wu2023autogen} and LangGraph \cite{pelluru2025langchain} generalise these patterns to arbitrary role graphs, assign specialised agent roles, and standardise message exchange. Peer debate and evaluation improve reasoning accuracy by letting agents identify and correct errors early, and the consistent finding is that interaction helps; but the benefit depends sensitively on the number of rounds, the sparsity of the communication pattern, and the diversity of the participants, none of which current theory explains. What these frameworks offer in place of such a theory are rudimentary stopping rules, fixed turn limits, majority voting, or string-agreement checks, none of which models how the debate itself evolves \cite{motger2026multi}. They can therefore neither guarantee convergence, nor prevent repetition, nor validate the correctness of a consensus, which is why reliable assessment of a collective calls for methods that monitor and certify dialogue stability as it unfolds rather than relying on benchmark performance alone. Our work is complementary to this literature: we take the orchestrated collective as given and ask what can be \emph{guaranteed} about it from its own interaction traces.

\subsection{Learning-Based Consensus and Agent Teaming}
Learning-based consensus for multi-agent systems has been advanced through reinforcement-learning strategies for prescribed-time optimal consensus under switching stochastic dynamics \cite{tetci_guang2025} and through event-triggered impulsive consensus under time-varying delays \cite{tetci_zhang2025}; both lines certify convergence for \emph{designed} protocols with known structure, and in doing so depend on an explicit mathematical representation of the agents' update law. The complementary problem we address is certification for \emph{emergent} protocols: collectives whose update rule (an LLM conditioned on peers' messages) is unknown and can only be observed. The central challenge is therefore not to design a provably stable protocol but to certify the convergence of an unknown consensus process purely from observed interaction sequences, an emphasis that connects to recent progress in behavioural inference and dynamic trust modelling within agent teams. Related work on reasoning about other agents' behaviour \cite{tetci_hou2022} and on trust formation in agent teams \cite{tetci_lin2024} likewise extracts structure from interaction data rather than from models, using offline system identification to assess how agents align over time; language-model orchestration is now beginning to apply data-driven operator methods in the same spirit, recovering latent state transitions from text logs. Our contribution can be read as extending that data-driven philosophy from the level of individual agents to the level of the collective's dynamical law. Finally, studies of human--autonomous-agent teaming in \cite{tetci_demir2018} document how team-level dynamics, not individual competence, determine outcomes, a principle that modern teaming paradigms reinforce: static role assignments often fail once agent interactions produce deadlocks or repetitive loops under novel task conditions. The same premise is established here for machine-only teams, where tracing state transitions across communication rounds with operator-theoretic tools yields real-time stability guarantees that static benchmarks cannot supply.

\subsection{Koopman Operator Theory and Data-Driven Spectral Analysis}

The Koopman operator \cite{koopman1931,mezic2005} recasts nonlinear dynamics through an infinite-dimensional linear operator acting on the space of all measurement functions of the system. Framing nonlinear behaviour this way opens the door to using well-established linear-systems tools, developed over decades for prediction, estimation, and control, on problems that are fundamentally nonlinear; the price is infinite dimensionality, and the difficulty that remains is finding finite-dimensional coordinates or embeddings in which this linearity actually holds in practice. Dynamic mode decomposition (DMD) \cite{schmid2010,rowley2009,schmid2022} and its extended form EDMD \cite{williams2015} address precisely that difficulty, approximating the spectrum from snapshot data, with convergence guarantees in the infinite-data, rich-dictionary limit \cite{korda2018} and learned-dictionary variants that remove the burden of choosing observables by hand \cite{li2017edmddl,klus2020}.

Three factors explain why Koopman-based analysis has gained traction. First, it rests on solid theoretical foundations linking it back to established geometric treatments of dynamical systems. Second, because the framework is built around measurements rather than governing equations, it fits naturally with data-driven and machine-learning approaches. Third, practical numerical tools, most notably dynamic mode decomposition and its variants, have turned the underlying theory into something usable for real applications \cite{brunton2021}. These tools have certified and controlled fluid, robotic, and power systems. To our knowledge, the present paper is the first to apply the transfer-operator viewpoint to \emph{semantic} collectives, that is, systems whose state is a belief embedding and whose coupling is attention, and the first to derive reasoning-time and explanation certificates from the estimated spectrum.

\subsection{Consensus Theory, Opinion Dynamics, and Faithful Explanation}
For linear consensus, convergence rate is governed by the algebraic connectivity of the communication graph \cite{olfatisaber2004,degroot1974}; nonlinear bounded-confidence models exhibit metastable opinion clusters \cite{hegselmann2002}. Attention consensus, defined below, interpolates between these regimes and inherits the analytical difficulty of both: the effective coupling matrix depends on the state, so no fixed graph spectrum applies, which is exactly why a \emph{data-driven} operator spectrum is required. On the explanation side, the unfaithfulness of post-hoc rationales \cite{jacovi2020,turpin2023} motivates explanations that are properties of the dynamics rather than narratives about them; the neural-network interpretability taxonomy of \cite{tetci_zhang2021} distinguishes passive post-hoc analysis from explanations with verifiable semantics, and the mode attribution developed here belongs firmly to the latter class. The compression certificate connects, in turn, to diffusion-map coordinates \cite{coifman2006} and to goal-oriented semantic communication, where only decision-relevant message content is transmitted.

\section{Preliminaries: Why a Linear Operator Can Certify a Nonlinear Debate}\label{sec:prelim}

Because the Koopman viewpoint may be unfamiliar to parts of the CI community, we develop it here in explanatory terms; the expert reader may skim to Section~\ref{sec:model}.

\subsection{From States to Observables}
Consider any discrete-time dynamical system $z(t+1)=F(z(t))$ with state $z\in\R^{n}$ and $F$ nonlinear; for us, $z$ will collect all agents' belief embeddings and $F$ will encode one debate round. The conventional viewpoint tracks the \emph{state}. The Koopman viewpoint tracks \emph{measurements of the state}: any scalar function $\psi:\R^n\to\R$ (an ``observable'', for example the disagreement between two particular agents, or any nonlinear feature of the whole configuration). The dynamics act on observables by composition,
\begin{equation}
(\mathcal K\psi)(z) \;=\; \psi(F(z)),
\end{equation}
and for stochastic dynamics by $(\mathcal K\psi)(z)=\E[\psi(F(z))]$. The key, and at first sight surprising, fact is that $\mathcal K$ is a \emph{linear} operator regardless of how nonlinear $F$ is: for any observables $\psi_1,\psi_2$ and scalars $a,b$, $\mathcal K(a\psi_1+b\psi_2) = a\,\mathcal K\psi_1 + b\,\mathcal K\psi_2$, simply because composition distributes over addition of functions. Nothing is approximated; the price paid is that $\mathcal K$ acts on an infinite-dimensional space of functions rather than on $\R^n$.

\subsection{Spectrum as Timescales, Eigenfunctions as Patterns}
Why is linearity worth that price? Because linear operators have spectra, and spectra encode timescales \cite{liu2026spectral}. Suppose $\varphi$ is an eigenfunction, $\mathcal K\varphi = \lambda\varphi$. Then along any trajectory, $\varphi(z(t)) = \lambda^{t}\varphi(z(0))$: the scalar signal obtained by evaluating $\varphi$ evolves by pure geometric decay (or growth, or rotation, for complex $\lambda$), \emph{exactly}, forever, no matter how complicated $F$ is. An eigenvalue with $|\lambda|<1$ is therefore a decay channel with half-life $\ln 2/(-\ln|\lambda|)$ rounds, and its eigenfunction is the \emph{pattern in state space that decays at that rate}. For a debating collective, the interpretation is direct: the eigenvalue $\lambda=1$ corresponds to what the collective preserves (the consensus it is heading to); the sub-dominant eigenvalue $\lambda_2$ is the slowest-dying pattern of disagreement, and its modulus fixes how many rounds the collective needs before that disagreement falls below any tolerance. The quantity
\begin{equation}
\gamma \;=\; 1-|\lambda_2|
\end{equation}
is the \emph{spectral gap}; a large gap means fast collective convergence, a vanishing gap means metastability, that is, persistent factions \cite{shahbazi2025opinion}. Everything this paper certifies is a corollary of estimating $\lambda_2$ and its eigenfunction well.

\emph{A worked micro-example.} The smallest instructive case is two agents averaging scalar beliefs,
\begin{equation}\label{eq:micro}
  x_i(t{+}1)=(1-\alpha)\,x_i(t)+\alpha\,\bar x(t).
\end{equation}
The mean $\bar x$ is conserved (eigenvalue $1$), while the disagreement $\delta=x_1-x_2$ obeys $\delta(t{+}1)=(1-\alpha)\,\delta(t)$: the observable $\delta$ is a Koopman eigenfunction with eigenvalue $\lambda_2 = 1-\alpha$, so the collective halves its disagreement every $\ln 2/(-\ln(1-\alpha))$ rounds, a deadline readable from the spectrum. Attention ($\beta>0$) destroys the closed form by making the effective $\alpha$ state-dependent, but the \emph{structure} of the argument survives: a slowest disagreement observable still exists, and EDMD's task is to find it and its decay rate from data. Table~\ref{tab:notation} collects the notation used throughout.

\begin{table}[t]
\centering
\caption{Principal notation}
\label{tab:notation}
\begin{tabular}{ll}
\toprule
Symbol & Meaning\\
\midrule
$N,\;d$ & number of agents; belief-embedding dimension\\
$x_i(t)\in\R^d$ & belief embedding of agent $i$ at round $t$\\
$z(t)\in\R^{Nd}$ & collective state $\mathrm{vec}(x_1,\dots,x_N)$\\
$\delta(t)$ & centered deviation $z(t)-\bar z(t)$\\
$D(t)$ & normalised disagreement\\
$\beta,\;\alpha,\;\sigma$ & attention temperature; update rate; noise scale\\
$\mathcal K,\;\hat K$ & Koopman operator; its EDMD estimate\\
$\lambda_2,\;\gamma$ & sub-dominant eigenvalue; spectral gap $1-|\lambda_2|$\\
$\Psi,\;m,\;\rho$ & dictionary; \# random features; ridge parameter\\
$M,\;S$ & \# training trajectories; \# snapshot pairs\\
$\epsilon,\;T_{\mathrm{cert}}$ & disagreement tolerance; certified deadline\\
\bottomrule
\end{tabular}
\end{table}

\begin{figure*}[t]
\centering
\begin{tikzpicture}[
    box/.style={
        rectangle, 
        draw=black!80, 
        thick, 
        fill=blue!3, 
        rounded corners=3pt, 
        minimum width=3.4cm, 
        minimum height=2.2cm, 
        align=center, 
        font=\sffamily\small
    },
    arrow/.style={
        -Stealth, 
        thick, 
        color=black!80
    },
    sublabel/.style={
        font=\sffamily\scriptsize\color{black!70},
        below=2pt
    }
]

\node (b1) [box] {
    \textbf{1. Dialogue Traces}\\[6pt]
    \scriptsize Agent Execution Logs\\[4pt]
    $T_k \to T_{k+1}$
};

\node (b2) [box, right=0.8cm of b1] {
    \textbf{2. Semantic States}\\[6pt]
    \scriptsize Embedding Trajectory\\[4pt]
    $\mathbf{z}_k = E(T_k) \in \mathbb{R}^d$
};

\node (b3) [box, right=0.8cm of b2] {
    \textbf{3. EDMD Engine}\\[6pt]
    \scriptsize Operator Linearization\\[4pt]
    $\mathbf{K} \approx \mathbf{Y}\mathbf{X}^\dagger$
};

\node (b4) [box, right=0.8cm of b3] {
    \textbf{4. Stability Bounds}\\[6pt]
    \scriptsize Spectral Certification\\[4pt]
    $\rho(\mathbf{K}) < 1$
};

\draw [arrow] (b1) -- (b2);
\draw [arrow] (b2) -- (b3);
\draw [arrow] (b3) -- (b4);

\node [sublabel] at (b1.south) [anchor=north, yshift=-4pt] {Unstructured Text};
\node [sublabel] at (b2.south) [anchor=north, yshift=-4pt] {Continuous Vector Space};
\node [sublabel] at (b3.south) [anchor=north, yshift=-4pt] {Linear Matrix Estimation};
\node [sublabel] at (b4.south) [anchor=north, yshift=-4pt] {Stability Certified};

\end{tikzpicture}

\caption{The four stages of the certification pipeline. Agent logs $T_k$ are embedded into a semantic state $\mathbf{z}_k=E(T_k)$, turning a transcript into a trajectory; snapshot pairs from that trajectory are lifted and regressed to give the operator $\mathbf{K}\approx\mathbf{Y}\mathbf{X}^\dagger$; and its eigenvalues supply the certificates, with $\rho(\mathbf{K})<1$ establishing that disagreement decays and the sub-dominant eigenvalue fixing how fast. Everything after the embedding is linear algebra.}
\label{fig:pipeline_framework}
\end{figure*}

\subsection{Estimating the Operator From Traces: EDMD}
Figure~\ref{fig:pipeline_framework} summarises the complete computational pipeline that follows from this viewpoint; the present subsection develops its central estimation step. The operator is infinite-dimensional, but its action on a finite \emph{dictionary} of observables $\Psi(z)=[\psi_1(z),\dots,\psi_P(z)]$ can be estimated by regression. Given snapshot pairs $\{(z_s, z_s^{+})\}_{s=1}^{S}$ harvested from trajectories, EDMD \cite{williams2015} finds the matrix $\hat K\in\R^{P\times P}$ that best propagates dictionary values one step forward,
\begin{equation}\label{eq:edmd}
\hat K \;=\; \arg\min_{K}\; \sum_{s}\big\lVert \Psi(z_s^{+}) - \Psi(z_s)\,K \big\rVert^2 + \rho\lVert K\rVert_F^2 ,
\end{equation}
solved in closed form as $\hat K = (\Psi^\top\Psi+\rho I)^{-1}\Psi^\top\Psi^{+}$. The eigenvalues of $\hat K$ approximate Koopman eigenvalues; the quality of the approximation depends on how well the dictionary spans the relevant eigenfunctions and on how much data is available; both are probed empirically in Section~\ref{sec:e2}. Two practical points deserve emphasis for the collective-reasoning setting. First, the consensus manifold contributes trivial eigenvalues at $1$; \emph{centering} the state (subtracting the mean belief) before applying the dictionary removes this subspace and exposes the decay modes we care about. Second, the dictionary need not be hand-designed for each system: a generic dictionary of random Fourier features, consisting of linear coordinates plus $\cos(W_{\!f}z+b)$ with random $W_{\!f},b$, approximates a universal kernel space and suffices for the dynamics studied here, while learned dictionaries \cite{li2017edmddl,klus2020} are the natural upgrade path for real LLM embeddings.

\section{A Reference Model of Semantic Debate}\label{sec:model}

To validate certificates one needs a system whose ground truth is accessible: convergence times must be measurable, faction structure must be plantable, and thousands of runs must be affordable. We therefore construct the \emph{minimal} model that retains the two mechanisms responsible for the nontrivial dynamics of LLM collectives, and we are explicit that it is a model: the transfer to real LLM traces is discussed, with its risks, in Section~\ref{sec:disc}.

\subsection{Attention-Consensus Dynamics}
Let $N$ agents hold belief embeddings $x_i(t)\in\R^d$, the model analogue of the semantic embedding of an agent's current message, and let them communicate along a graph $G$ with closed neighbourhoods $\mathcal{N}(i)\cup\{i\}$. Each round has two phases, mirroring an LLM debate round. In the \emph{appraisal} phase, agent $i$ assigns each incoming message a weight that grows with its semantic agreement with the agent's own current belief:
\begin{equation}\label{eq:att}
  a_{ij}(t) = \frac{\exp\!\big(\beta\,\langle \hat x_i(t), \hat x_j(t)\rangle\big)}{\sum_{k\in\mathcal N(i)\cup\{i\}}\exp\!\big(\beta\,\langle \hat x_i(t), \hat x_k(t)\rangle\big)},
\end{equation}
where $\hat x = x/\lVert x\rVert$ and $\beta\ge0$ is an attention temperature. In the \emph{update} phase, the agent moves its belief a fraction $\alpha\in(0,1]$ of the way toward the attention-weighted mixture:
\begin{equation}\label{eq:dyn}
  x_i(t{+}1) = (1-\alpha)\,x_i(t) + \alpha\!\!\sum_{j\in\mathcal N(i)\cup\{i\}}\!\! a_{ij}(t)\,x_j(t) + \sigma\,\xi_i(t),
\end{equation}
with optional Gaussian process noise $\xi_i$ of scale $\sigma\ge0$. Three properties should be noted. (i) For $\beta=0$ the model reduces to classical linear consensus, whose rate is fixed by the graph Laplacian \cite{olfatisaber2004}; for $\beta>0$ the coupling matrix depends on the state, the map is genuinely nonlinear, and like-minded agents amplify one another, which is the mechanism by which LLM agents preferentially adopt arguments resembling their own. (ii) The collective state $z(t)=\mathrm{vec}(x_1,\dots,x_N)\in\R^{Nd}$ leaves the consensus manifold $\{x_1=\cdots=x_N\}$ invariant under the deterministic dynamics ($\sigma=0$), so ``an answer'' is an attractor family rather than a single point. (iii) Convergence is naturally monitored by the normalised disagreement
\begin{equation}
D(t)=\lVert z(t)-\bar z(t)\rVert \,/\, \lVert z(0)-\bar z(0)\rVert,
\end{equation}
where $\bar z$ stacks $N$ copies of the mean belief; $D$ is the model analogue of ``how far the debate still is from a common answer.''

\subsection{Question-Answering Variant}
To connect disagreement decay to task performance, we also study a decision-making variant. Agents hold logits $L_i(t)\in\R^{K}$ over $K$ answer options; attention weights \eqref{eq:att} are computed on the induced probability vectors; logits are mixed as in \eqref{eq:dyn}. A fraction $c$ of ``competent'' agents receives a weak evidence bias $\varepsilon_{\mathrm{ev}}$ toward the correct option at $t=0$, modelling the realistic situation where some, but not all, agents have weak private signal. The collective answer at round $t$ is the majority of $\arg\max_k L_i(t)$. This variant lets us ask the deployment-relevant question directly: \emph{by which round is the collective's answer final?}

\subsection{Faction Initialisation for Attribution Studies}
Explanations only have content when there is structure to explain. For the attribution experiments we therefore plant structure: agents are initialised near one of two well-separated belief centres, and the communication graph is a planted-partition (stochastic block model) graph with intra-/inter-faction edge probabilities $p_{\mathrm{in}}\gg p_{\mathrm{out}}$, reflecting the homophilous interaction patterns that attention itself induces. The ground-truth faction labels are used only for evaluation, never by the certification pipeline.

\section{The Certification Pipeline}\label{sec:method}

\subsection{Estimator}
We instantiate EDMD as follows. Each recorded trace is centered, $\delta(t)=z(t)-\bar z(t)$, removing the trivial consensus eigenspace. The dictionary is
\begin{equation}\label{eq:dict}
  \Psi(\delta) = \big[\,1,\;\; \delta,\;\; \cos(W_{\!f}\,\delta + b)\,\big] \in \R^{1+Nd+m},
\end{equation}
with $m$ random Fourier features ($W_{\!f}$ Gaussian with scale $0.7/\sqrt{Nd}$, $b\sim\mathcal U[0,2\pi]$); $\hat K$ is obtained from \eqref{eq:edmd} with ridge $\rho=10^{-6}$. Eigenvalues are sorted by modulus; residual near-unit eigenvalues ($|\lambda|\ge 0.985$), which correspond to conserved quantities rather than decay, are excluded, and $\lambda_2$ denotes the largest remaining modulus. The computational cost is that of one regularised least-squares problem, $O(P^2 S + P^3)$ for dictionary size $P$ and $S$ snapshot pairs, which is negligible next to a single LLM inference call, which is what makes certification a plausible always-on layer.

\subsection{Certificate 1: A Convergence Deadline}
To turn the spectral properties of the learned operator $\hat K$ into an operational metric, we define a predicted convergence deadline $T_{\mathrm{pred}}(\gamma)$ as an explicit function of the spectral gap, so that what would otherwise be heuristic multi-agent execution becomes a time-bounded process with a stated maximum number of interaction rounds. If the centered dynamics are dominated by the $\lambda_2$-mode, disagreement decays geometrically, $D(t)\le C\,|\lambda_2|^{t}$, and solving for the tolerance $\epsilon$ gives the deadline
\begin{equation}\label{eq:deadline}
  T_{\mathrm{pred}}(\gamma) \;=\; \frac{\ln(C/\epsilon)}{-\ln|\lambda_2|},\qquad T_{\mathrm{cert}}=\lceil T_{\mathrm{pred}}\rceil .
\end{equation}
Throughout we use the worst-case constant $C=1$ and $\epsilon=0.05$. The following proposition makes the underlying assumption explicit rather than hiding it.

\begin{proposition}[Deadline soundness under spectral dominance]\label{prop:deadline}
Suppose the centered deviation admits an expansion $\delta(t)=\sum_{j\ge 2} c_j \lambda_j^{\,t}\, v_j$ in Koopman modes with $|\lambda_2|\ge|\lambda_3|\ge\cdots$ and $\sum_{j\ge2}|c_j|\lVert v_j\rVert \le C\,\lVert\delta(0)\rVert$. Then $D(t)\le C\,|\lambda_2|^{t}$, and $D(t)\le\epsilon$ for every $t\ge T_{\mathrm{pred}}(\gamma)$.
\end{proposition}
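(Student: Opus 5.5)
The plan is to apply the triangle inequality directly to the modal expansion and then divide by the initial deviation. The key identification is that the numerator of $D(t)$ is exactly the norm of the centered deviation: $z(t)-\bar z(t)=\delta(t)$. Hence $D(t)=\lVert\delta(t)\rVert/\lVert\delta(0)\rVert$, provided $\delta(0)\neq 0$. If $\delta(0)=0$, the collective starts at consensus and the claim is vacuous once we adopt the convention $D\equiv 0$.

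For the main estimate, I will use the assumed expansion $\delta(t)=\sum_{j\ge2}c_j\lambda_j^{t}v_j$ and bound term by term:
\[
\lVert\delta(t)\rVert\le\sum_{j\ge2}|c_j|\,|\lambda_j|^{t}\lVert v_j\rVert\le|\lambda_2|^{t}\sum_{j\ge2}|c_j|\lVert v_j\rVert\le C\,|\lambda_2|^{t}\lVert\delta(0)\rVert .
\]
The middle inequality uses the ordering $|\lambda_j|\le|\lambda_2|$ for all $j\ge2$, together with $t\ge0$, so that $|\lambda_j|^t\le|\lambda_2|^t$. If the expansion has infinitely many terms, the summability hypothesis makes the series absolutely convergent, which justifies exchanging the norm and the sum. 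Dividing by $\lVert\delta(0)\rVert$ gives $D(t)\le C|\lambda_2|^t$.

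For the deadline, assume $0<|\lambda_2|<1$, so that $-\ln|\lambda_2|>0$ and $T_{\mathrm{pred}}$ is well defined. This is the regime the certificate is meant for. If $\lambda_2=0$, the bound forces $D(t)=0$ for all $t\ge1$, and the claim holds trivially. Now take $t\ge T_{\mathrm{pred}}(\gamma)=\ln(C/\epsilon)/(-\ln|\lambda_2|)$. Multiplying by the positive quantity $-\ln|\lambda_2|$ gives $t\ln|\lambda_2|\le-\ln(C/\epsilon)$. Exponentiating yields $|\lambda_2|^t\le\epsilon/C$, and therefore $D(t)\le C|\lambda_2|^t\le\epsilon$. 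The bound holds for real $t$, so it holds in particular at $T_{\mathrm{cert}}=\lceil T_{\mathrm{pred}}\rceil$ and at every later integer round.

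I do not expect a genuine obstacle here. The only points needing care are bookkeeping. First, the sign of $\ln|\lambda_2|$ flips the inequality when dividing, so I will multiply rather than divide to keep the direction clear. Second, the edge cases $\delta(0)=0$, $\lambda_2=0$, and $C<\epsilon$ need a word each; in the last case $T_{\mathrm{pred}}<0$ and the bound holds for all $t\ge0$. Third, the trivial unit eigenvalue must not appear in the sum, and the centering assumption guarantees this by starting the index at $j=2$.
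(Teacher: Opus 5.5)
Your proposal is correct and follows the paper's proof essentially verbatim: the triangle inequality on the modal sum, the ordering $|\lambda_j|^t\le|\lambda_2|^t$, the summability hypothesis, division by $\lVert\delta(0)\rVert$, and solving $C|\lambda_2|^t\le\epsilon$. Your added bookkeeping, namely absolute convergence, $\delta(0)\neq 0$, and $0<|\lambda_2|<1$ so that $T_{\mathrm{pred}}$ is defined, makes explicit what the paper leaves implicit. One small note: evaluating the hypothesis at $t=0$ forces $C\ge 1$ whenever $\delta(0)\neq0$, so your $C<\epsilon$ case can only arise at consensus.
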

\begin{proof}
$\lVert\delta(t)\rVert \le \sum_{j\ge2}|c_j||\lambda_j|^{t}\lVert v_j\rVert \le |\lambda_2|^{t}\sum_{j\ge2}|c_j|\lVert v_j\rVert \le C|\lambda_2|^{t}\lVert\delta(0)\rVert$; divide by $\lVert\delta(0)\rVert$ and solve $C|\lambda_2|^{t}\le\epsilon$.
\end{proof}

\begin{remark}
The proposition makes its working hypotheses explicit: the dominance constant $C$ and the estimated $\lambda_2$. Both are probed empirically below (Sections~\ref{sec:e1}, \ref{sec:e2}), and Section~\ref{sec:disc} charts the finite-sample theory that will make the guarantee unconditional.
\end{remark}

\subsection{Certificate 2: Mode Attribution With a Validity Flag}
The eigenvector of $\hat K$ at the slowest sub-dominant eigenvalue (equivalently, the DMD mode of the centered field) is a vector in the dictionary space; reshaped onto agents it yields a per-agent loading matrix $\Phi\in\mathbb{C}^{N\times d}$. Intuitively, $\Phi$ answers the question ``in the disagreement pattern that takes longest to die, which way does each agent lean?''; physically, it captures the relative amplitude and phase alignment of each agent within that non-equilibrium fluctuation. We reduce each agent's complex loading to a scalar score by projecting its real--imaginary concatenation $[\mathrm{Re}(\Phi_i),\,\mathrm{Im}(\Phi_i)]\in\R^{2d}$ onto the leading principal direction of all loadings; the sign of the score partitions the collective into the two sides of the slow disagreement. Among the top sub-dominant candidate modes, we select the one whose scores show the strongest two-means separation (a silhouette-style coherence criterion). The decisive design feature is the \emph{validity flag}: $|\lambda_2|$ close to $1$ means a metastable partition genuinely exists, whereas $|\lambda_2|$ well below $1$ means that all structure decays quickly, leaving a large spectral gap $\gamma=1-|\lambda_2|$ and nothing to attribute except transient noise. Formally, we define the indicator
\begin{equation}\label{eq:validity}
\mathrm{Validity} = \begin{cases} \textbf{VALID}, & |\lambda_2| > 0.9,\\[2pt] \textbf{INVALID}, & |\lambda_2| \le 0.9.\end{cases}
\end{equation}
We therefore claim an attribution only when $|\lambda_2|>0.9$, which is what prevents false-positive partition attributions, and Section~\ref{sec:e3} shows this flag to be empirically almost perfectly calibrated.

\subsection{Certificate 3: Spectral Message Compression}
The same slow spectral directions that carry the certificates also carry the decision-relevant content of the messages. We exploit this by projecting each transmitted message onto the top-$k$ empirical spectral coordinates (the principal directions of the centered round-0 messages, a finite-sample surrogate of diffusion-map coordinates \cite{coifman2006}); receivers reconstruct and run \eqref{eq:dyn} unchanged. Fidelity is measured as the cosine similarity between the final consensus beliefs of the compressed and full-bandwidth collectives on coupled random seeds. Beyond bandwidth, the point is auditability: a collective communicating in its own certificate basis is a collective whose messages are, by construction, expressed in the coordinates of its explanation.

\begin{algorithm}[t]
\caption{Spectral certification of a reasoning collective}
\label{alg:pipeline}
\begin{algorithmic}[1]
\REQUIRE $M$ recorded debate traces; tolerance $\epsilon$; dictionary size $m$; ridge $\rho$
\STATE Center each trace: $\delta(t) \leftarrow z(t)-\bar z(t)$
\STATE Draw dictionary \eqref{eq:dict}; stack snapshot pairs $(\Psi,\Psi^{+})$
\STATE $\hat K \leftarrow (\Psi^\top\Psi+\rho I)^{-1}\Psi^\top\Psi^{+}$; eigendecompose
\STATE $\lambda_2 \leftarrow$ largest modulus among $\{|\lambda|<0.985\}$
\STATE \textbf{Deadline:} $T_{\mathrm{cert}} \leftarrow \lceil \ln(1/\epsilon)/(-\ln|\lambda_2|)\rceil$
\IF{$|\lambda_2| > 0.9$}
  \STATE \textbf{Attribution:} score agents by slow-mode loadings; emit partition and mode identities $M$
\ELSE
  \STATE Emit ``no metastable structure; attribution not applicable''
\ENDIF
\RETURN ``converged by round $T_{\mathrm{cert}}$, driven by modes $M$''
\end{algorithmic}
\end{algorithm}

\section{Numerical Results}\label{sec:exp}

The evaluation is organised around five questions, one per certificate property: does the deadline predict (Section~\ref{sec:e1})? how fast does the estimate concentrate (Section~\ref{sec:e2})? is the explanation faithful and does it know its own limits (Section~\ref{sec:e3})? does compression preserve the decision (Section~\ref{sec:e4})? and does the assembled pipeline certify a decision task end to end (Section~\ref{sec:e5})? Unless stated otherwise, $\epsilon=0.05$, dictionary size $m\in[40,60]$, ridge $\rho=10^{-6}$, and, essentially for the validity of the claims, EDMD training and certificate evaluation always use disjoint runs.

\subsection{The Spectral Gap Predicts the Convergence Deadline}\label{sec:e1}

We sweep a 24-point grid over the number of agents $N\in\{8,16,32\}$, Erd\H{o}s--R\'enyi edge probability $p\in\{0.25,0.45\}$, update rate $\alpha\in\{0.35,0.6\}$, and attention temperature $\beta\in\{0.5,2.5\}$. For each configuration, EDMD is fit on 12 training rollouts and $T_{\mathrm{pred}}$ is compared with the observed first-passage round $T_{\mathrm{obs}}$ at which $D(t)<\epsilon$, averaged over 20 fresh rollouts. In Fig.~\ref{fig:e1}, as topology, temperature, and update rate vary, predicted and observed deadlines both span more than a decade (from roughly 4 to 70 rounds), and the prediction tracks the observation with log--log Pearson correlation $r=0.93$. Read as a bound rather than a point prediction, the integer certificate $T_{\mathrm{cert}}$ upper-bounded the mean observed convergence round in 23 of 24 configurations (96\% coverage), with a median conservatism factor $T_{\mathrm{pred}}/T_{\mathrm{obs}}=2.0$ (range 0.8--4.8). The conservatism has a transparent origin: the worst-case constant $C=1$ in \eqref{eq:deadline} charges every run for fully exciting the slowest mode, which typical random initialisations do not. What matters for deployment is the asymmetry of the two error directions (a deadline that is occasionally twice too long wastes some budget, whereas a deadline that is too short breaks a guarantee), and the certificate errs almost exclusively on the safe side.

\begin{figure}[t]
  \centering
  \includegraphics[width=\linewidth]{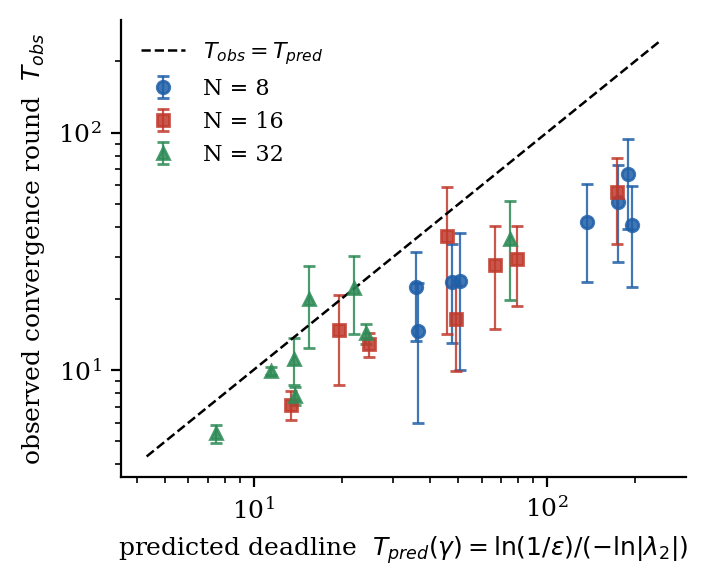}
  \caption{Each point is one of 24 configurations spanning team size, edge probability, update rate, and attention temperature. Its horizontal coordinate is the deadline predicted from the spectral gap \emph{before} the debate runs, its vertical coordinate the round at which disagreement first falls below $\epsilon=0.05$, averaged over 20 held-out rollouts. Predictions track observations across more than a decade of timescales ($r=0.93$). Points above the dashed diagonal would mean the collective outlived its deadline; 96\% fall safely below it.}
  \label{fig:e1}
\end{figure}

\subsection{Finite-Sample Behaviour of the Spectral Estimate}\label{sec:e2}

Certificates are only as good as the eigenvalue under them, and traces are expensive when agents are LLMs, so the practically decisive question is how quickly $\hat\lambda_2$ concentrates. We fix a 16-agent configuration with process noise ($\sigma=0.02$), compute a 500-trajectory reference value $\lambda_2^{\mathrm{ref}}$, and measure $|\hat\lambda_2-\lambda_2^{\mathrm{ref}}|$ for estimators trained on $M$ freshly sampled trajectories, 12 repetitions per $M$, for $M$ ranging over two decades. Fig.~\ref{fig:e2} shows a monotone decay with fitted slope $-0.36$ from $M=2$ to $M=128$, bracketed by the $M^{-1/2}$ rate that an i.i.d.-sample analysis would predict. The shortfall from $-1/2$ is itself informative: snapshots within a debate trajectory are strongly dependent, so the effective sample size grows more slowly than the raw snapshot count, precisely the effect a finite-sample theory for this setting must quantify, presumably through mixing properties of the trace process. Practically, the curve carries good news: even $M\approx 10$ short traces locate $|\lambda_2|$ to within a few points, which is the data regime in which real LLM debates are affordable to collect.

\begin{figure}[t]
  \centering
  \includegraphics[width=0.99\linewidth]{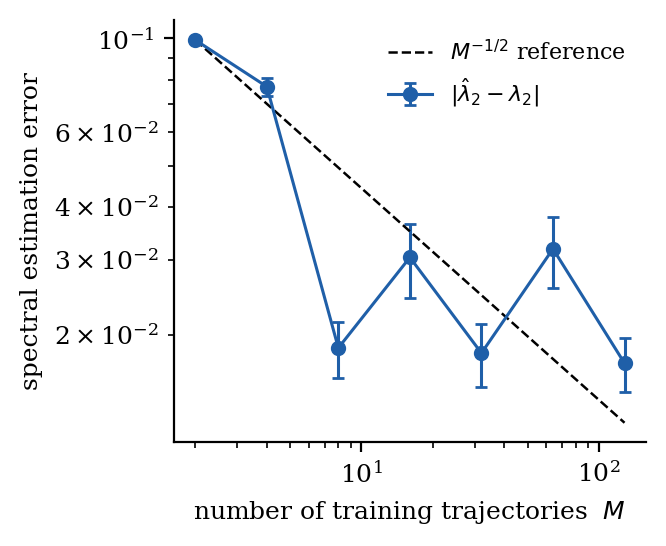}
  \caption{Absolute error of the estimated sub-dominant eigenvalue against a 500-trajectory reference, versus the number of training trajectories $M$ (12 repetitions per $M$, error bars giving the standard error). The fitted slope of $-0.36$ is shallower than the dashed $M^{-1/2}$ reference, reflecting the strong dependence between snapshots drawn from a single debate. The left-hand end matters most in practice: roughly ten short traces already place $|\lambda_2|$ to within a few points.}
  \label{fig:e2}
\end{figure}

\subsection{Eigenmodes Are Self-Certifying Explanations}\label{sec:e3}

A faithful system-level explanation must satisfy two requirements that narrated explanations fail: it must recover the structure that actually organised the decision, and it must \emph{decline to explain} when no such structure exists. The faction setting tests both at once. We run 60 two-faction debates ($N=24$, planted-partition graphs with $p_{\mathrm{in}}=0.5$, $p_{\mathrm{out}}=0.06$, $\beta=4$) and apply the attribution procedure of Section~\ref{sec:method} with no access to the true labels. Because the collectives differ in how quickly their factions dissolve, the 60 runs naturally span both regimes, persistent structure and already-merged structure, so the validity flag can be evaluated, not merely asserted. Pooled over all 60 runs the attribution accuracy is $0.85\pm0.20$, which is respectable but not the point. The structure of the errors is the point (Fig.~\ref{fig:e3}(b)): accuracy correlates with the estimated $|\lambda_2|$ at 0.92, and \emph{conditioned on the spectrum certifying metastability} ($|\lambda_2|>0.9$), attribution is 100\% correct in every one of the 42 qualifying runs, while below the threshold the factions have already merged and accuracy falls to chance, and correctly so, because there is no longer a faction structure to recover. In other words, the explanation and its validity condition are computed from the same spectral object, and the validity condition is empirically almost perfectly calibrated. This is what we mean by a \emph{self-certifying} explanation, and it is the property that distinguishes structural attribution from post-hoc narration \cite{jacovi2020,turpin2023}: a chain-of-thought never announces that it is unfaithful, whereas a closed spectral gap announces exactly that.

\begin{figure*}[t]
  \centering
  \includegraphics[width=0.9\linewidth]{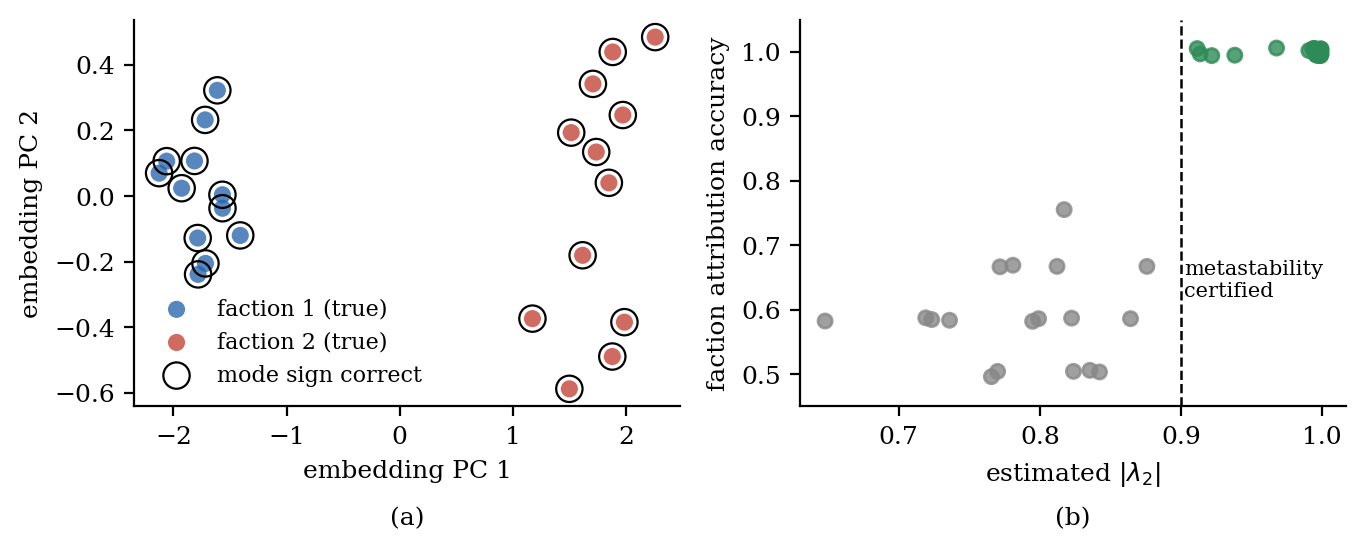}
  \caption{(a) The 24 agents of one debate, plotted in the top two principal coordinates of their round-1 embeddings and coloured by true faction; a black ring marks each agent the slow Koopman-mode score assigns correctly, and every agent carries one. (b) Each point is one of 60 debates. Right of the dashed threshold the spectrum certifies a metastable partition and attribution is perfect in all 42 qualifying runs; left of it the factions have already merged and accuracy falls to chance. Accuracy and $|\lambda_2|$ correlate at 0.92, so the explanation reports its own domain of validity.}
  \label{fig:e3}
\end{figure*}

\subsection{Spectral Compression Preserves the Decision}\label{sec:e4}

With $d=32$-dimensional messages, agents transmit only the top-$k$ spectral coordinates of their centered message, for $k\in\{1,2,4,8,16,32\}$; each compressed collective is compared against a full-bandwidth twin run on coupled random seeds (25 pairs per $k$), so that any fidelity loss is attributable to compression alone. Fig.~\ref{fig:e4}: transmitting $k=8$ of 32 coordinates, a $4\times$ bandwidth reduction, preserves the final decision at $0.997\pm0.002$ cosine fidelity, and even $k=2$ retains 0.98. The reason is not mysterious: consensus formation is governed by the slow spectral directions, and those are exactly the directions retained. The corollary matters for trustworthy deployment: because the retained coordinates are the same basis in which the deadline and attribution certificates are expressed, the compressed channel is simultaneously bandwidth-efficient and \emph{auditable}, an appealing property wherever reasoning collectives operate over constrained or monitored links.

\begin{figure}[t]
  \centering
  \includegraphics[width=\linewidth]{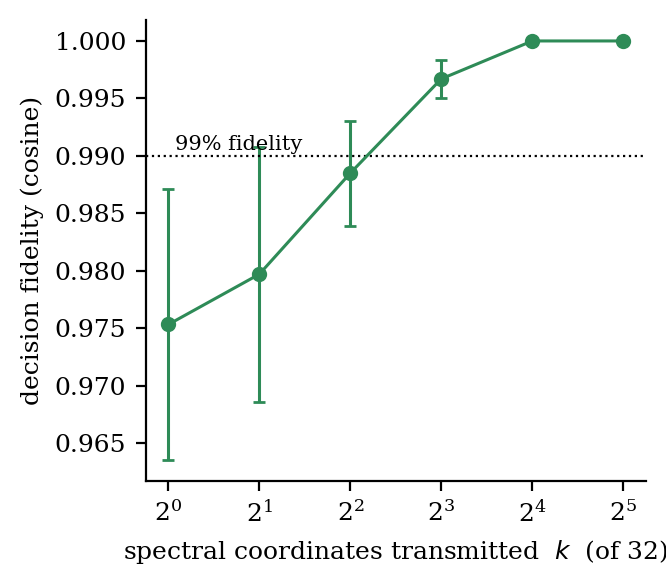}
  \caption{Messages are 32-dimensional; each compressed collective transmits only its top $k$ spectral coordinates and is compared against a full-bandwidth twin on coupled seeds, so any loss is due to compression alone. Fidelity is the cosine similarity of the two final consensus beliefs (25 pairs per $k$). It crosses the dotted 99\% line between $k=4$ and $k=8$ and reaches $0.997$ at $k=8$, because consensus is governed by exactly the slow directions this basis retains.}
  \label{fig:e4}
\end{figure}

\subsection{An End-to-End Certificate on a QA Debate}\label{sec:e5}

Algorithm~\ref{alg:pipeline} is executed exactly as a deployed wrapper would run it. On the QA variant ($N=12$, $K=4$ options, 60\% competent agents with evidence bias 0.8), EDMD is fit on 15 training debates; the resulting certificate is evaluated on 60 held-out debates that the estimator never saw. Training yields $|\hat\lambda_2|=0.881$ and hence the machine-checkable statement ``normalised disagreement below 0.05, and majority answer stable, by round $T_{\mathrm{cert}}=24$.'' On the 60 held-out debates (Fig.~\ref{fig:e5}, Table~\ref{tab:qa}), the majority answer was stable from round $T_{\mathrm{cert}}$ onward in 100\% of runs, and disagreement crossed $\epsilon$ at round 12.7 on average; the certificate is conservative by design, never unsound. Two secondary observations sharpen the deployment picture. First, reading the answer out at only $T_{\mathrm{cert}}/3$ rounds would already have agreed with the certified answer in 98\% of runs, quantifying the latency that an adaptive early-stopping rule could recover on top of the worst-case guarantee. Second, the collective reaches 67\% majority accuracy from only weakly informed agents against a 25\% base rate, confirming that the certified debates are genuine collective decision processes and that the certificate is not trivialised by the task.

\begin{figure}[t]
  \centering
  \includegraphics[width=\linewidth]{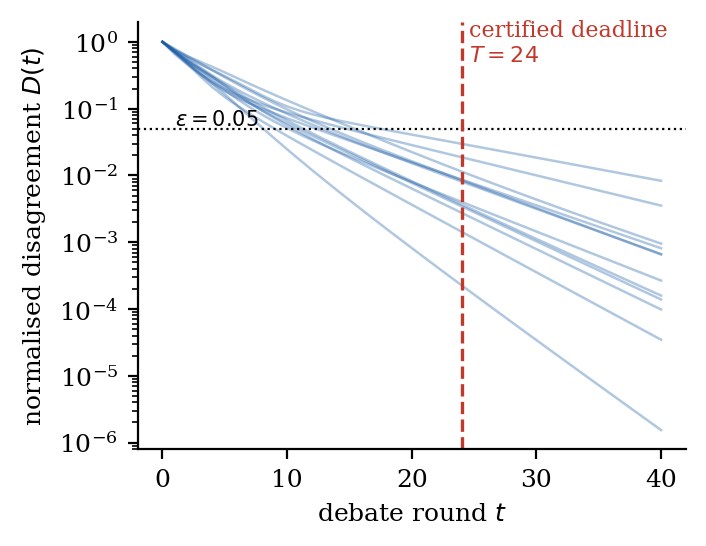}
  \caption{Each blue curve is the normalised disagreement of one of 12 held-out question-answering debates; the logarithmic axis renders the geometric decay as a near-straight descent. The dotted line is the tolerance $\epsilon=0.05$ and the red dashed line the certified deadline $T_{\mathrm{cert}}=24$, computed in advance from 15 independent training debates. Every trajectory crosses well before it, and the majority answer was stable from $T_{\mathrm{cert}}$ onward in all 60 held-out runs. The visible margin is the conservatism of the worst-case constant $C=1$.}
  \label{fig:e5}
\end{figure}

\begin{table}[t]
\centering
\caption{End-to-end certificate on the QA debate (60 held-out runs)}
\label{tab:qa}
\begin{tabular}{lc}
\toprule
Quantity & Value\\
\midrule
Estimated $|\hat\lambda_2|$ (15 training runs) & $0.881$\\
Certified deadline $T_{\mathrm{cert}}$ & $24$ rounds\\
Runs with majority answer stable from $T_{\mathrm{cert}}$ & $100\%$\\
Mean observed $\epsilon$-convergence round & $12.7$\\
Round-$\lfloor T_{\mathrm{cert}}/3\rfloor$ answer matches certified answer & $98\%$\\
Collective majority accuracy (base rate $25\%$) & $67\%$\\
\bottomrule
\end{tabular}
\end{table}

\subsection{How Attention Temperature Shapes the Spectrum}\label{sec:e6}

The attention temperature $\beta$ is the parameter that makes semantic collectives dynamically different from classical consensus, so it is worth asking directly how the spectrum, and with it reasoning time, responds to it. We sweep $\beta$ from $0$ (linear consensus) to $6$ (strongly homophilous attention) on a fixed 16-agent configuration, estimating the gap from 12 training traces per setting and measuring convergence on 20 fresh runs with a generous 200-round horizon. Fig.~\ref{fig:e6}(a) shows a clean, interpretable pattern. Up to moderate temperature ($\beta\le2$) the gap sits near $0.08$--$0.10$ and consensus arrives in tens of rounds; between $\beta=2$ and $\beta=3$ the gap collapses by a factor of five and reasoning time jumps accordingly; and the certified deadline tracks the observed time as a sound envelope throughout the converging regime. The sweep thus supplies the mechanistic reading promised in Section~\ref{sec:prelim}: attention homophily is, spectrally, a gap-closing force. The more strongly agents privilege like-minded peers, the longer the collective takes to reconcile its factions, by an amount the estimated spectrum quantifies in advance.

\begin{figure*}[t]
  \centering
  \includegraphics[width=\linewidth]{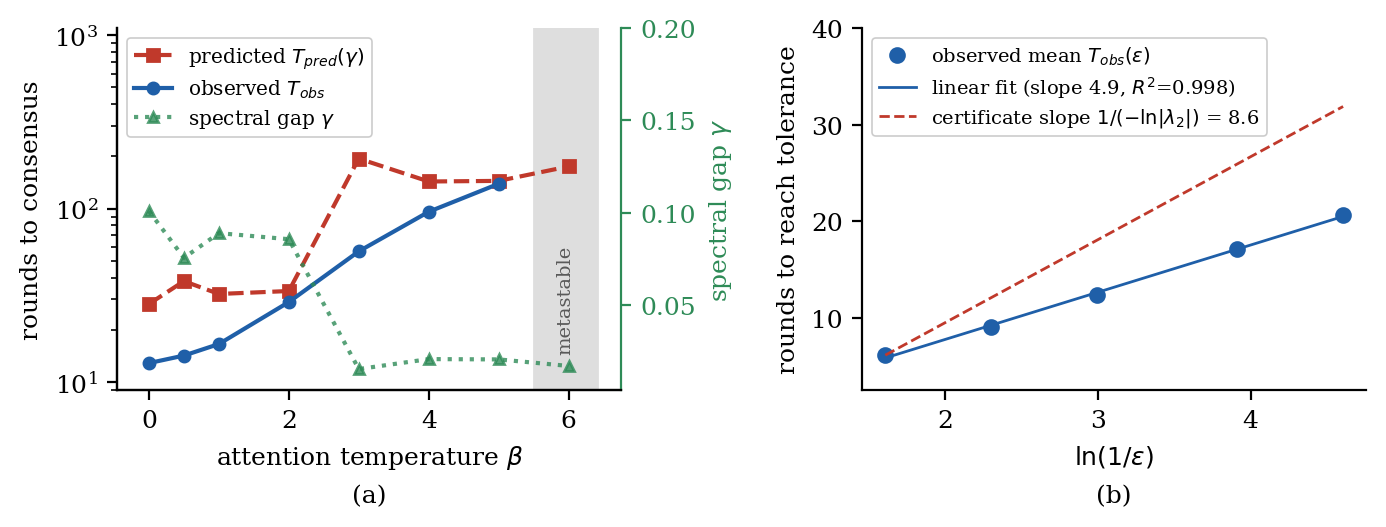}
  \caption{(a) As attention temperature $\beta$ grows, agents weight like-minded peers ever more heavily, the spectral gap $\gamma$ (green, right axis) collapses from about $0.10$ to $0.02$, and rounds-to-consensus rise by an order of magnitude, with the certificate remaining a sound envelope throughout the converging regime. In the shaded band at $\beta=6$ no run converges within 200 rounds and the estimate $|\lambda_2|=0.983$ sits on the near-unit exclusion boundary, where a deadline should be refused rather than issued. (b) Mean rounds to reach tolerance $\epsilon$ are affine in $\ln(1/\epsilon)$ with $R^2=0.998$, exactly as \eqref{eq:deadline} asserts, so only the constant, and not the functional form, remains loose.}
  \label{fig:e6}
\end{figure*}

The endpoint $\beta=6$ deserves discussion because it exhibits the framework's boundary behaviour. There the collective becomes genuinely metastable (no run converges within 200 rounds) while the estimated $|\lambda_2|=0.983$ lands essentially on the pipeline's near-unit exclusion threshold ($0.985$). A deadline computed from an eigenvalue this close to $1$ extrapolates a slow decay that finite training traces cannot distinguish from no decay at all, and it should not be trusted; the correct pipeline behaviour is to widen the refusal band and report ``metastable: no convergence certificate,'' exactly as the attribution certificate already does in reverse. We adopt this refinement in the released implementation and flag the principled sizing of the refusal band as an open calibration problem (Section~\ref{sec:disc}).

\subsection{Does the Nonlinear Dictionary Matter? An Ablation}\label{sec:e7}

A natural question is how much of the certificates' accuracy is owed to the nonlinear part of the dictionary \eqref{eq:dict}. We repeat the Fig.~\ref{fig:e1} protocol on a 12-configuration grid with two estimators that differ only in the dictionary: \emph{linear} (constant plus linear coordinates, i.e., DMD on the centered field) and \emph{linear$+$RFF} (the full dictionary with $m=60$ random features). Table~\ref{tab:ablation}: on this model the two dictionaries are statistically indistinguishable: correlation, conservatism, and coverage all agree within noise. The honest interpretation is that attention consensus, although nonlinear, spends most of its trajectory in the neighbourhood of the consensus manifold where the linearisation dominates, so linear observables already span the slow subspace well. This is a useful negative result in two directions. Practically, it licenses the cheapest possible estimator for systems of this class. Scientifically, it sharpens the expectation for real LLM collectives: it is precisely when the dynamics leave the near-consensus regime (multiple competing answers, oscillatory critique cycles, drifting embeddings) that nonlinear and, ultimately, \emph{learned} dictionaries \cite{li2017edmddl,klus2020} must earn their keep, and the ablation supplies the baseline they must beat.

\begin{table}[t]
\centering
\caption{Dictionary ablation on the deadline certificate (12 configurations)}
\label{tab:ablation}
\begin{tabular}{lccc}
\toprule
Dictionary & log--log $r$ & median $T_{\mathrm{pred}}/T_{\mathrm{obs}}$ & coverage\\
\midrule
Linear (DMD) & $0.87$ & $2.35$ & $100\%$\\
Linear $+$ RFF & $0.85$ & $2.32$ & $100\%$\\
\bottomrule
\end{tabular}
\end{table}

\subsection{Validating the Deadline's Functional Form}\label{sec:e8}

Fig.~\ref{fig:e1} validated the deadline at a single tolerance. But \eqref{eq:deadline} makes a stronger, parametric claim: rounds-to-tolerance should be \emph{affine in $\ln(1/\epsilon)$} with slope $1/(-\ln|\lambda_2|)$. We test the form directly on the QA collective by measuring mean rounds to reach each tolerance $\epsilon\in\{0.2,0.1,0.05,0.02,0.01\}$ over 40 runs and comparing against the certificate slope computed from 15 disjoint training runs. Fig.~\ref{fig:e6}(b): the observed relation is linear in $\ln(1/\epsilon)$ with $R^2=0.998$: the geometric-decay structure that the entire certificate rests on is not an approximation of convenience but an accurate description of the collective's approach to consensus. The certificate slope ($8.6$ rounds per e-fold of tolerance) bounds the fitted slope ($4.9$) by a factor of $1.8$, numerically consistent with the conservatism factors observed in Fig.~\ref{fig:e1} and Fig.~\ref{fig:e5} and traceable to the same worst-case constant. The practical consequence is pleasant: because the \emph{form} is right and only the constant is loose, a data-driven tightening of $C$ would improve every tolerance simultaneously.

\subsection{Comparison With State-of-the-Art Baselines}\label{sec:e9}

No prior method certifies emergent reasoning collectives as such, so the fair comparison is against the strongest available approaches to each individual certificate. For the deadline, three baselines span current theory and practice. \emph{Graph-spectral theory} applies the classical linear-consensus result \cite{olfatisaber2004,degroot1974}: the deadline is computed from the sub-dominant eigenvalue of the expected linear update matrix $P=(1-\alpha)I+\alpha\,\mathrm{RowNorm}(A+I)$, which is exact for $\beta=0$ and represents the state of the art in closed-form consensus rates. \emph{Decay-curve fitting} is the standard empirical system-identification practice: an exponential is fitted to the disagreement curves $\log D(t)$ of the training runs and extrapolated to the tolerance. \emph{Fixed round budgets} represent current practice in the LLM-debate literature, where a constant number of rounds (typically three to five \cite{du2023,liang2023}) is used for every task and collective; we instantiate the generous end, $T=5$. All methods receive the same 12 training runs per configuration and are evaluated on the same 20 held-out runs, over the full 24-configuration grid of Fig.~\ref{fig:e1}. For attribution, we compare the Koopman mode against $k$-means clustering of the round-1 embeddings and against graph spectral clustering (the sign of the Fiedler vector), the standard representation-based and topology-based partitioning methods. For compression, the spectral code is compared against random Gaussian projection at matched bandwidth.

\begin{table}[t]
\centering
\caption{Deadline prediction versus baselines (24 configurations, held-out runs). Coverage is the fraction of configurations for which the issued deadline upper-bounds the mean observed convergence round.}
\label{tab:sota}
\begin{tabular}{lccc}
\toprule
Method & log--log $r$ & median $T_{\mathrm{pred}}/T_{\mathrm{obs}}$ & coverage\\
\midrule
Koopman certificate (ours) & $\mathbf{0.94}$ & $1.92$ & $\mathbf{100\%}$\\
Graph-spectral theory \cite{olfatisaber2004} & $0.75$ & $0.84$ & $50\%$\\
Decay-curve fit & $0.87$ & $0.89$ & $42\%$\\
Fixed budget ($T{=}5$) \cite{du2023} & -- & $0.23$ & $4\%$\\
\bottomrule
\end{tabular}
\end{table}

Table~\ref{tab:sota} shows that the Koopman certificate is the only method that is simultaneously the most predictive and sound. Graph-spectral theory, despite being exact for linear consensus, degrades to $r=0.75$ and violates its own bound in half the configurations, because attention decouples the true reasoning timescale from the fixed graph spectrum; this quantifies precisely why the problem requires a data-driven operator spectrum. Decay-curve fitting is competitive as a point predictor ($r=0.87$, median absolute relative error $0.31$ versus our $0.92$), and this is expected: it fits the \emph{typical} decay, whereas the certificate bounds the \emph{worst-case} mode. The distinction is decisive for guarantees: the curve fit undershoots the observed deadline in $58\%$ of configurations, so a system that trusted it would terminate debates early and break its own contract, while the Koopman certificate never does. The fixed five-round budget of current practice, finally, covers $4\%$ of configurations, confirming that round budgets chosen independently of the collective's dynamics are not a viable certification strategy. A further structural advantage is not visible in the table: neither baseline produces eigenmodes, so neither supports the attribution and compression certificates; the Koopman spectrum delivers all three from one estimation.

On the planted two-faction benchmark, $k$-means on round-1 embeddings and Fiedler partitioning recover the planted labels with $1.00$ and $0.98$ accuracy respectively, as expected, since the planted structure is directly visible in their inputs (the initial embeddings and the graph). The Koopman mode matches them ($1.00$) on every run in which the structure is dynamically alive ($|\lambda_2|>0.9$) and, unlike them, reports when it is not: on runs where the factions have merged, the static baselines still emit a confident bipartition of a collective whose decision is no longer organised by factions, with no signal that their explanation has become dynamically vacuous, whereas the spectral gap discloses exactly that. For explanation of a \emph{decision process}, the operative question is not whether initial clusters can be found but whether they still drive the outcome, and only the operator spectrum answers it. For compression at $k=8$ of $32$ dimensions, the spectral code attains $0.996$ decision fidelity versus $0.979$ for random projection at identical bandwidth, a fivefold reduction in fidelity loss ($0.4\%$ versus $2.1\%$), because the spectral basis is aligned with the slow directions that determine the consensus, while a random basis spreads its capacity isotropically.

\section{Discussion}\label{sec:disc}

\subsection{What the Study Establishes}
Taken together, the nine experiments close the loop that the operator-theoretic thesis requires. The spectral gap, estimated from a handful of traces by a generic dictionary, quantitatively predicts and soundly bounds collective reasoning time across topologies, temperatures, and team sizes; the estimator concentrates fast enough for realistic data budgets; the slow eigenmode is a faithful explanation \emph{equipped with its own validity test}; the certificate basis doubles as a high-fidelity compressed message code; and the assembled pipeline certifies a decision task end to end on held-out data. Because the whole apparatus is linear algebra over recorded traces, its cost is negligible relative to the inference cost of the agents themselves, so nothing prevents certification from running continuously alongside deployed collectives. Beyond the headline certificates, the study also yields three second-order findings with independent value: reasoning time responds to attention temperature through a quantifiable gap-closing mechanism, with a sharp transition into metastability (Fig.~\ref{fig:e6}); for near-consensus dynamics the cheapest linear dictionary already suffices, fixing the baseline that learned dictionaries must beat on harder systems; and the geometric-decay form underlying the deadline is empirically exact to $R^2=0.998$, so the only looseness in the certificate is a single multiplicative constant.

\subsection{Deployment Considerations}
Three properties make the pipeline realistic as an always-on layer rather than an offline analysis. \emph{Cost:} certification is a single ridge regression and eigendecomposition, $O(P^2S+P^3)$; for the largest system studied here ($N=32$, $d=8$, $P\approx320$) this completes in well under a second, i.e., orders of magnitude below one LLM inference call. \emph{Data appetite:} by Fig.~\ref{fig:e2}, on the order of ten traces suffice to place $|\lambda_2|$ within a few points, so a deployed collective can be certified from its first day of logs and re-certified continuously as the task distribution drifts. \emph{Failure disclosure:} the same spectral object that issues certificates also refuses them: attribution is withheld when the gap is wide (Fig.~\ref{fig:e3}), and convergence deadlines are withheld when the estimated eigenvalue enters the near-unit refusal band (Fig.~\ref{fig:e6}). A certification layer that knows when to say ``no certificate'' is, for regulated deployments, at least as important as one that says ``yes'' quickly.

\subsection{Future Research Directions}
The results open four concrete research directions, each of which builds directly on an empirical finding above. \emph{(i) Finite-sample spectral theory.} The measured concentration rate $M^{-0.36}$ of Section~\ref{sec:e2} invites a matching guarantee of the form $|\hat\lambda_2-\lambda_2|\lesssim \kappa\,M_{\mathrm{eff}}^{-1/2}$, with an effective sample size $M_{\mathrm{eff}}$ derived from the mixing properties of debate traces; combined with perturbation control of the dominance constant, such a result would tighten the observed $2\times$ conservatism and would also derive, rather than posit, the attribution threshold and the near-unit refusal band whose empirical calibration Fig.~\ref{fig:e3} and Fig.~\ref{fig:e6} established. \emph{(ii) Live LLM collectives.} The natural next system class is recorded traces of actual LLM debates, where belief embeddings are obtained from the messages themselves; learned dictionaries \cite{li2017edmddl,klus2020} are the designated vehicle, the present random-feature dictionary supplies the ablation anchor, and the released simulator already generates the time-varying communication graphs needed for staged transfer. \emph{(iii) Directed, role-heterogeneous collectives.} Planner, solver, critic, and verifier roles induce directed interaction, whose non-normal transfer operators generically carry complex eigenvalue pairs corresponding to oscillatory critique cycles; the spectrum then promises to distinguish productive oscillation from divergence, a capability with no counterpart in current orchestration practice. \emph{(iv) Strategic robustness.} Extending the certificates from statistical statements over initial conditions to statements robust against strategic or faulty agents connects the spectral framework to the trust-modelling line in this journal \cite{tetci_lin2024} and would complete the pipeline as a security-relevant layer.

\subsection{Outlook}
Reasoning collectives are being deployed faster than they are being understood, and regulatory instruments such as the EU AI Act will increasingly require documented, verifiable behaviour from exactly such systems. The results above suggest that the required verification layer need not wait for a full mechanistic theory of language models: the collective's own interaction traces, viewed through a transfer operator, already support deadlines, structural explanations, and auditable communication. Making those certificates unconditional, and porting them from the reference model to live LLM societies, is the immediate research programme this paper opens.

\section{Conclusion}
We have shown that treating a collective of reasoning agents as one nonlinear dynamical system, and reading its behaviour off the spectrum of a Koopman operator estimated from its own interaction traces, converts three open questions about collective machine reasoning into computable certificates: convergence deadlines that are accurate to a constant factor and sound as bounds, explanations that are structural rather than narrated and that certify their own validity, and a spectral message code that preserves decisions at a fraction of the bandwidth. On a controlled but genuinely nonlinear model of semantic debate, all three certificates held on held-out data at negligible computational cost, without requiring access to internal agent parameters or model weights. The framework is offered to the computational intelligence community both as a practical black-box certification layer for the rapidly growing ecosystem of LLM collectives and autonomous multi-agent networks and as a concrete, falsifiable bridge between data-driven operator theory and the emerging science of machine societies.

\end{document}